\newtheorem{theorem}{Theorem}
\newtheorem{lemma}{Lemma}[section]
\newenvironment{proof}{{\bf Proof:} }{\hspace*{\fill}$\Box$\vspace{2mm}}
\newtheorem{corollary}{Corollary}[section]
\newtheorem{proposition}{Proposition}[section]
\newtheorem{definition}{Definition}[section]
\title{\LARGE\bf Non-crossing Connectors in the Plane}
\author{\large Jan Kratochv\'{\i}l \hspace{5.1cm} Torsten~Ueckerdt\\[+1mm]
 {\tt honza@kam.mff.cuni.cz} \hspace{3.2cm} {\tt ueckerdt@googlemail.com} \hspace{0.8cm}\\[+1mm]
      \sf    Charles University in Prague,\\
      \sf    Prague, Czech Republic
}
\begin{document}

\maketitle

\begin{abstract}
 We consider the non-crossing connectors problem, which is stated as follows: Given $n$ simply connected regions $R_1,\ldots,R_n$ in the plane and finite point sets $P_i \subset R_i$ for $i=1,\ldots,n$, are there non-crossing connectors $\gamma_i$ for $(R_i,P_i)$, i.e., arc-connected sets $\gamma_i$ with $P_i \subset \gamma_i \subset R_i$ for every $i=1,\ldots,n$, such that $\gamma_i \cap \gamma_j = \emptyset$ for all $i \neq j$?

 We prove that non-crossing connectors do always exist if the regions form a collection of pseudo-disks, i.e., the boundaries of every pair of regions intersect at most twice. We provide a simple polynomial-time algorithm if the regions are axis-aligned rectangles. Finally we prove that the general problem is NP-complete, even if the regions are convex, the boundaries of every pair of regions intersect at most four times and $P_i$ consists of only two points on the boundary of $R_i$ for $i=1,\ldots,n$.
\end{abstract}


\section{Introduction}
Connecting points in the plane in a non-crossing way is a natural problem in computational geometry. For example, it is well known that one can always find a non-crossing matching with straight line segments between any set of $n$ red and $n$ blue points in the plane. Related problems ask for a non-crossing path that alternates between red and blue vertices, or two non-crossing spanning trees, one on each color, which minimize the number of crossings between them. We refer to the survey article of Kaneko and Kano~\cite{KanKan03} for more on red and blue points. Recent investigations consider the problem of finding a non-crossing matching between $n$ ordered pairs of points sets, e.g., between a single point and a vertical line~\cite{many10,many11}, or between two vertical line segments~\cite{Ver08}.

In this paper, we investigate what happens if we allow general curves instead of just straight line segments. Moreover, we want to connect not only $n$ pairs of points, but $n$ finite sets of points. Of course, we can always find such non-crossing curves, unless two point sets intersect. But if we impose for every point set a region that the corresponding curve must be contained in, then determining whether or not such non-crossing curves exist becomes a non-trivial problem.

\medskip
\noindent
\textbf{Related work.} A lot of research has been done for connecting points in the plane with straight line segments in a non-crossing way. For instance, a point set $P$ is \emph{universal} for a class $\mathcal{G}$ of planar graphs if the vertices of every graph $G \in \mathcal{G}$ can be embedded onto the points $P$ such that straight edges do not cross. It has been shown~\cite{PacGriMohPol91,CasUrr96} that every set of $n$ points in general position is universal for the class of $n$-vertex outer-planar graphs. The smallest point set that is universal for the class $\mathcal{G}_n$ of all $n$-vertex planar graphs consists of at least $1.235n$~\cite{ChrKar89,Kur04} and at most $\mathcal{O}(n^2)$~\cite{FraPacPol90,Sch90} points. Deciding whether a given graph embeds on a given point set, is known to be NP-complete~\cite{Cab04}.

In many variants edges are allowed to be more flexible than straight line segments. Kaufmann and Wiese~\cite{KauWie02} show that every $n$-element point set is universal for $\mathcal{G}_n$ if every edge is a polyline with at most $2$ bends. Moreover, some $n$-element point sets are universal for $\mathcal{G}_n$ if edges bend at most once~\cite{EveLazLioWis10}. If the bending points have to be embedded onto $P$ as well, then universal sets of size $\mathcal{O}(n^2 / \log n)$ for $1$ bend, $\mathcal{O}(n \log n)$ for $2$ bends, and $\mathcal{O}(n)$ for $3$ bends are known~\cite{DujEvaLazLenLioRapWis11}. A variant with so-called ortho-geodesic edges was studied especially for trees of maximum degree $3$ and $4$ ~\cite{DiGFraFulGriKru11}.

Variants where every vertex $v$ has an associated subset $P_v$ of $P$ of its possible positions has been studied for much simpler graphs like matchings and cycles, however for straight edges only. For cycles, deciding whether a set of non-crossing edges exists is known to be NP-complete, even if every $P_v$ is a vertical line segment or every $P_v$ is a disk~\cite{Lof07}. For matchings, NP-completeness has been shown if $|P_v| \leq 3$~\cite{many10,many11}, or $P_v$ is a vertical line segment of unit length~\cite{Ver08}. The latter result still holds if every edge $\{u,v\}$ may be a monotone curve within the convex hull of $P_u \cup P_v$~\cite{Spe11}.

\medskip
\noindent
\textbf{Our results.} In this paper we allow edges to be general curves, which w.l.o.g. can be thought of as polylines of finite complexity, that is with a finite number of bends. Given an $N$-element point set $P$, each curve $\gamma_i$ is asked to go through (connect) a fixed subset $P_i$ of $P$ of two or more points. Any two such curves, called \emph{connectors}, shall be non-crossing, i.e., have empty intersection. In particular, w.l.o.g. $P$ is partitioned into subsets $P_1,\ldots,P_n$ and we ask for a set of $n$ non-crossing connectors $\gamma_1,\ldots,\gamma_n$ with $P_i \subset \gamma_i$ for $i=1,\ldots,n$. It is easily seen that the order in which $\gamma_i$ visits the points $P_i$ may be fixed arbitrarily. Indeed, we could ask to embed any planar graph $G_i$ with $|P_i|$ vertices and curved edges onto $P_i$, even while prescribing the position of every vertex in $G_i$.

Non-crossing connectors as described above do always exist. But the situation gets non-trivial if we fix subsets $R_1,\ldots,R_n$ in the plane, called \emph{regions}, and impose $\gamma_i \subset R_i$ for every $i=1,\ldots,n$. Throughout this paper every $R_i$ is a simply connected region in the plane, which contains $P_i$.

In Section~\ref{sec:pseudo-disks} we prove that non-crossing connectors do always exist if the given regions form a collection of pseudo-disks, i.e., the boundaries of every pair of regions intersect at most twice. In Sections~\ref{sec:rectangles} and~\ref{sec:NP-complete} we consider the computational complexity of deciding whether or not non-crossing connectors exist. In particular, in Section~\ref{sec:rectangles} we show that the problem is polynomial if the regions are axis-aligned rectangles, while in Section~\ref{sec:NP-complete} we prove that the problem is NP-complete, even if the regions are convex, the boundaries of every pair of regions intersect at most four times, and $|P_i| = 2$ for every $i=1,\ldots,n$. We start with some notation in Section~\ref{sec:problem} and show that the non-crossing connectors problem is in NP.

\section{The non-crossing connectors problem}\label{sec:problem}
The non-crossing connectors problem is formally defined as follows.

\medskip
\noindent
\textsc{Non-crossing Connectors}

\noindent
\textbf{Given:} \hspace{5pt}\hangindent = 15pt Collection $R_1,\ldots,R_n$ of simply connected subsets of the plane and a finite point set $P_i \subset R_i$, for $i=1,\ldots,n$ with $P_i \cap P_j = \emptyset$ for $i\neq j$.

\noindent
\textbf{Question:} \hspace{5pt}\hangindent = 15pt Is there a collection $\gamma_1,\ldots,\gamma_n$ of curves, such that $P_i \subset \gamma_i \subset R_i$ for $i=1,\ldots,n$ and $\gamma_i \cap \gamma_j = \emptyset$ for $i\neq j$?

\medskip
\noindent
The boundary of every region $R_i$ is a simple closed curve, denoted by $\partial R_i$. We assume here and for the rest of the paper that $\partial R_i \cap \partial R_j$ is a finite point set. We may think of $\bigcup_{i=1}^n \partial R_i$ as an embedded planar graph $G = (V,E)$ with vertex set $V = \{p \in \mathbb{R}^2 \;|\; p \in \partial R_i \cap \partial R_j, i \neq j\}$ and edge set $E = \{e \subset \mathbb{R}^2 \;|\; e \text{ is a connected component of } \bigcup \partial R_i \setminus V\}$. A point $p \in \partial R_i \cap \partial R_j$ is either a \emph{crossing point} or a \emph{touching point}, depending on whether the cyclic order of edges in $\partial R_i$ and $\partial R_j$ around $p$ is alternating or not. We say that two regions $R_i, R_j$ are \emph{$k$-intersecting} for $k \geq 0$ if $|\partial R_i \cap \partial R_j| \leq k$ and all these points are crossing points, i.e., w.l.o.g. $k$ is even. A set $R_1,\ldots,R_n$ of regions is \emph{$k$-intersecting} if this is the case for any two of them. For example, $R_1,\ldots,R_n$ are $0$-intersecting if and only if\footnote{To be precise, we always require for the ``if''-part that $\partial R_i \cap \partial R_j$ is a finite set of crossing points for $i \neq j$.} they form a nesting family, i.e., $R_i \cap R_j \in \{\emptyset, R_i, R_j\}$ for $i \neq j$. 

Regions $R_1,\ldots,R_n$ are a called a \emph{collection of pseudo-disks} if and only if they are $2$-intersecting. Usually two pseudo-disks may have one touching point. However, this can be locally modified into two crossing points with affecting the existence of non-crossing connectors. Pseudo-disks for example include homothetic copies of a fixed convex point set, but they are not convex in general. A collection of axis-aligned rectangles is always $4$-intersecting, but not necessarily $2$-intersecting. Finally, if $R_1,\ldots,R_n$ are convex polygons with at most $k$ corners, then they are $2k$-intersecting.

We close this section by showing that the non-crossing connectors problem belongs to NP.

\begin{proposition}\label{prop:in-NP}
 \textsc{Non-crossing Connectors} is in NP.
\end{proposition}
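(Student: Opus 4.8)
The plan is to exhibit a polynomial-size certificate that can be verified in polynomial time, the point being that a solution itself is a continuous object. We encode a solution by its combinatorial interaction with the arrangement of the region boundaries. Since each $R_i$ may be assumed to be a simple polygon given by its vertices, of total input size $N$, the arrangement $\mathcal A$ of $\bigcup_{i=1}^n \partial R_i$ together with a large enclosing box is a plane subdivision with $O(N^2)$ vertices, edges and faces, all faces being bounded disks, and it is computable in polynomial time; we refine it so that every point of $P = \bigcup_i P_i$ is one of its vertices. For every face $f$ of $\mathcal A$ and every $i$ it is decidable in polynomial time whether $f \subseteq R_i$, and every edge of $\mathcal A$ lies on $\partial R_k$ for exactly one $k$ (because $\partial R_i \cap \partial R_j$ is finite for $i \neq j$).

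Next I would argue that, if non-crossing connectors exist, then there exist non-crossing connectors $\gamma_1, \dots, \gamma_n$ of polynomial combinatorial complexity. First, every connector contains a tree whose vertex set contains $P_i$ and all of whose leaves lie in $P_i$, so we may assume each $\gamma_i$ is such a tree, with $O(|P_i|)$ edges; moreover we may assume the $\gamma_i$ are pairwise disjoint, each embedded without self-crossings, and meeting $\bigcup_j \partial R_j$ transversally in finitely many points. Among all such solutions choose one minimizing the total number of crossings of $\bigcup_i \gamma_i$ with $\bigcup_j \partial R_j$. A standard innermost-bigon exchange then shows that in this minimal solution every edge of every $\gamma_i$ crosses every edge of $\mathcal A$ at most once: if some connector met an edge $e \subseteq \partial R_k$ twice, one finds two such crossings $x,y$ cutting off a disk $D$ whose interior avoids $\mathcal A$; since $\gamma_i \subseteq R_i$ does not cross $\partial R_i$ we have $k \neq i$, hence the two faces of $\mathcal A$ bordering $e$ along $D$ have the same membership in $R_i$ and both lie in $R_i$, so the connector can be rerouted to run just outside $e$ rather than dipping into $D$ --- eliminating the two crossings at $x$ and $y$, staying inside $R_i$, and (by minimality of $D$ over all connectors) not meeting any other $\gamma_j$. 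Consequently $\bigcup_i \gamma_i$ creates at most $|P| \cdot |E(\mathcal A)| = O(N^3)$ crossings with $\mathcal A$, and the overlay $H := \mathcal A \cup \bigcup_i \gamma_i$ is a plane graph of size polynomial in $N$.

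The certificate is this plane graph $H$, presented as an abstract graph with a rotation system, together with the labelling recording which edges form which $\gamma_i$ and the identification of the $\mathcal A$-part of $H$ with the precomputed arrangement. A polynomial-time verifier checks that: (i) $H$ is a valid planar embedding refining $\mathcal A$; (ii) for each $i$, every face of $\mathcal A$ traversed by $\gamma_i$ is contained in $R_i$, and the leaves and required vertices of $\gamma_i$ are exactly the points of $P_i$; (iii) the edge sets of distinct $\gamma_i, \gamma_j$ are disjoint and, inside each face of $\mathcal A$, the pieces of the connectors form a non-crossing pattern, so that $\gamma_i \cap \gamma_j = \emptyset$; and (iv) each $\gamma_i$ is connected and contains $P_i$. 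Conversely, any combinatorial structure passing these tests is realizable by actual polylines inside the regions: process the faces of $\mathcal A$ one by one; each is a topological disk carrying a prescribed non-crossing set of connections between marked boundary points, which can always be drawn inside that disk, and stitching the pieces together yields non-crossing connectors $\gamma_i$ with $P_i \subseteq \gamma_i \subseteq R_i$. Hence such a certificate exists if and only if non-crossing connectors exist.

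I expect the only nontrivial point to be the normalization step --- verifying that a crossing-minimal solution crosses $\mathcal A$ only polynomially often, i.e., carrying out the bigon-exchange argument carefully when many mutually non-crossing connectors, each confined to its own region, are present; the arrangement construction, the verification, and the face-by-face realization are routine.
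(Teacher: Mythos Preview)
Your approach is genuinely different from the paper's. The paper does not attempt to bound the complexity of a witness directly; instead it builds a $3$-connected planar triangulation $G''$ from the arrangement (so that its embedding is combinatorially rigid), adds for each $P_i$ a path of ``connecting edges'', and declares an abstract topological graph in which connecting edges for $P_i$ may cross everything except other connecting edges and the edges coming from $\partial R_i$. Then non-crossing connectors exist iff this AT-graph is weakly realizable, and NP-membership is inherited from the Schaefer--Sedgwick--\v{S}tefankovi\v{c} theorem. So the paper outsources precisely the step you try to do by hand.

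Your direct certificate argument has a real gap in the normalization step. The assertion ``one finds two such crossings $x,y$ cutting off a disk $D$ whose interior avoids $\mathcal A$'' is not justified: the arc $\alpha'\subset\gamma_i$ between two consecutive crossings with $e$ may well cross other arrangement edges, so the bigon $D$ typically contains pieces of $\mathcal A$ (and hence possibly points of $P$ and even entire connectors $\gamma_j\subset D$). More importantly, your reroute sends $\alpha$ ``just outside $e$'', i.e.\ to the \emph{non-$D$} side of $e'$; minimality of $D$ tells you nothing about what lies there, so the clause ``by minimality of $D$ over all connectors, not meeting any other $\gamma_j$'' does not follow. Concretely, another connector $\gamma_j$ may cross the sub-arc $e'$ between $x$ and $y$, and then any arc parallel to $e'$ on either side must meet $\gamma_j$. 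To make a bigon exchange work here you would at least need to choose $x,y$ consecutive along $e$ among \emph{all} connector crossings \emph{and} belonging to the same connector edge, and argue that such a pair exists whenever some connector edge is heavy on $e$; this is exactly where the interaction of several mutually non-crossing connectors with the fixed arrangement becomes delicate, and your sketch does not address it. The paper's reduction sidesteps this entirely by invoking the (nontrivial) NP-membership of weak realizability; if you want an elementary proof you must really carry out the exchange argument in the presence of the other connectors, not just flag it as the nontrivial point.
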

\begin{proof}
 We reduce our problem to Weak Realizability of Abstract Topological Graphs. Abstract topological graphs (AT-graphs, for short) have been introduced in~\cite{KraLubNes91} as triples $(V,E,R)$ where $(V,E)$ is a graph and $R$ is a set of pairs of edges of $E$. The AT-graph $(V,E,R)$ is weakly realizable if $(V,E)$ has a drawing (not necessarily non-crossing) in the plane such that $ef\in R$ whenever the edges $e,f$ cross in the drawing. Weak realizability of AT-graphs is NP-complete. The NP-hardness was shown in~\cite{Kra91}, and the NP-membership was shown relatively recently in~\cite{SchSedSte03}.

 We assume the input of the problem be described as a plane graph $G$ (the boundaries of the regions) with the incidence structure of the points of $P$ to the faces of the graph. The size of the input is measured by the number of crossing points of the boundaries plus the number of regions and points to be connected. Strictly speaking $G$ may contain nodeless loops corresponding to the boundaries of regions which are not crossed by any other boundary. Such a case will be handled quickly by the next construction. We refer to Figure~\ref{fig:NP} for an illustrative example.

 Add vertices and edges to $G$ to create a vertex $3$-connected supergraph $G'$ of $G$. This can be achieved for instance by subdividing every edge of $G$ by at most $3$ new extra vertices (subdivide nodeless loops by $3$ vertices, loops by $2$ vertices and simple edges by a single vertex each so that to avoid multiple edges in $G'$), adding a new vertex in each face adjacent to all vertices of this face (both the original and the new ones) and adding edges connecting the subdividing vertices to create a triangulation. Such a graph $G'$ is $3$-connected and hence it has a topologically unique non-crossing drawing in the plane (up to the choice of the outerface and its orientation).

 \begin{figure}[htb]
  \centering
  \includegraphics{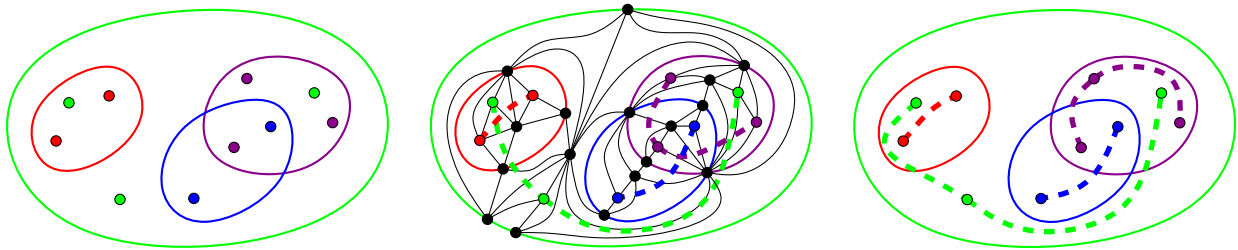}
  \caption{From Non-crossing Connectors to Weak Realizability of AT-graphs.}
  \label{fig:NP}
 \end{figure}

 For every point of $P$, choose a triangular face of $G'$ in the face of $G$ it lies in, and regard it as a vertex adjacent to the vertices of this triangular face (if more points of $P$ are assigned to the same face of $G'$, we add edges between them and to the vertices of the face to triangulate it). Call the resulting planar triangulation $G''$. As the last step, for every $i$, add edges that connect the points of $P_i$ by a path, and call the edges of these paths the {\em connecting edges}. Call the resulting graph $\widetilde{G}$.

 Now define an AT-graph with underlying graph $\widetilde{G}$ by allowing the connecting edges of points $P_i$ to cross anything but other connecting edges and the edges resulting from the boundary of $R_i$. No other edges are allowed to cross, in particular, no edges of $G''$ may  cross each other. It is straightforward to see that a weak realization  of this AT-graph is a collection of non-crossing connectors, and vice versa. Thus the NP-membership follows by the result of Schaefer et al.~\cite{SchSedSte03} and the fact that our construction of $\widetilde{G}$ is polynomial (if $G$ has $c$ crossing points, $a$ arcs, and $f$ faces,  and $p$ is the total number of points in $P$, then $\widetilde{G}$ has at most $c+3a+f+p$ vertices and at most $3c+9a+3f+4p$ edges).
\end{proof}

\section{Pseudo-Disks}\label{sec:pseudo-disks}

In this section we prove that non-crossing connectors do always exist if the given regions form a collection of pseudo-disks. We begin with an auxiliary lemma.

\begin{lemma}\label{lem:pseudo-disks}
 Let $R,R'$ be pseudo-disks, $p \in R \backslash R'$ be a point, and $\gamma \subset R \cap R'$ a curve that intersects $\partial R$ exactly twice. Then the connected component of $R \backslash \gamma$ not containing $p$ is completely contained in the interior of $R'$.
\end{lemma}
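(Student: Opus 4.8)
The plan is to exploit the pseudo-disk condition to control how the two boundaries interleave, and then run a Jordan-curve argument. First I would argue that under the hypotheses $\partial R$ and $\partial R'$ cross in \emph{exactly} two points. They cross an even number of times, hence twice or not at all; the no-crossing case is impossible, since $\gamma\subseteq R\cap R'$ reaches $\partial R$ (so $R$ has boundary points in $\overline{R'}$) while $p\in R\setminus R'$ witnesses $R\not\subseteq R'$, and together these forbid both $\partial R\subseteq\mathrm{int}(R')$ and $\partial R\cap\overline{R'}=\emptyset$. Call the two crossing points $x,y$. They split $\partial R$ into an arc $\beta_{\mathrm{in}}\subseteq\mathrm{int}(R')$ and an arc $\beta_{\mathrm{out}}$ lying in the exterior of $R'$, and they split $\partial R'$ so that the part of $\partial R'$ inside $R$ is a single arc $\delta$. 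This $\delta$ is a chord of the topological disk $R$, cutting it into $A=R\cap\mathrm{int}(R')$ and $B=R\setminus R'$, with $p\in B$.

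Next I would locate $\gamma$ relative to this picture. Since $\gamma\subseteq\overline{R'}$ meets $\partial R$ in exactly two points $a,b$, and the only boundary points of $R$ lying in $\overline{R'}$ are those of $\overline{\beta_{\mathrm{in}}}$, both endpoints $a,b$ lie on $\beta_{\mathrm{in}}$ (away from the degenerate situation $\{a,b\}\cap\{x,y\}\neq\emptyset$, which can be removed by a small perturbation of $\gamma$). Thus $\gamma$ is itself a chord of $R$ and splits $R$ into two components: $C_1$, bounded by $\gamma$ together with the sub-arc $\alpha_1\subseteq\beta_{\mathrm{in}}$ of $\partial R$ between $a$ and $b$; and $C_2$, bounded by $\gamma$ together with the complementary arc $\alpha_2\supseteq\beta_{\mathrm{out}}$. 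Because $\gamma\subseteq\overline{R'}$ is disjoint from the open set $B$, all of $B$ stays in a single component, so $B\subseteq C_2$ and in particular $p\in C_2$. Hence $C_1$ is the component that does \emph{not} contain $p$, and it remains to show $C_1\subseteq\mathrm{int}(R')$.

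For the containment I would use that $\partial C_1=\gamma\cup\alpha_1\subseteq\overline{R'}$: since $\overline{R'}$ is simply connected, the Jordan curve $\partial C_1$ bounds a disk lying entirely in $\overline{R'}$, namely $C_1\subseteq\overline{R'}$. To upgrade this to the open interior, I would rule out that $C_1$ meets $\partial R'$. Inside $R$ the only points of $\partial R'$ are those of the chord $\delta$ separating $A$ from $B$; but $C_1$ is disjoint from $B$ and is open, so if a point of $\delta$ lay in $C_1$, then a whole neighbourhood of it — which meets $B$ — would lie in $C_1$, a contradiction. The boundary arc $\alpha_1\subseteq\beta_{\mathrm{in}}$ already lies in $\mathrm{int}(R')$, so altogether $C_1\subseteq\mathrm{int}(R')$. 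The step I expect to be the main obstacle is the clean bookkeeping of the second paragraph: pinning down that the two crossings force exactly the decomposition into $\beta_{\mathrm{in}},\beta_{\mathrm{out}}$ and the single chord $\delta$, that $\gamma$'s endpoints are forced onto $\beta_{\mathrm{in}}$, and correctly identifying which of $C_1,C_2$ carries $p$; the final simple-connectedness argument and the interior upgrade are then routine.
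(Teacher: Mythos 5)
Your proposal is correct and is essentially the paper's own argument: both proofs use the pseudo-disk bound of at most two points in $\partial R \cap \partial R'$ to force the arc of $\partial R$ bounding the component away from $p$ to lie inside $R'$, and then conclude by a Jordan-curve/simple-connectedness step applied to the boundary $\gamma \cup \alpha_1$ of that component. Your bookkeeping is organized slightly differently --- you split $\partial R$ at the two boundary crossings $x,y$ and locate $p$'s side via connectivity of $B = R \setminus R'$, which incidentally avoids the paper's implicit treatment of $p$ as a point of $\partial R$, and you make the upgrade from $C_1 \subseteq \overline{R'}$ to $C_1 \subseteq \mathrm{int}(R')$ explicit where the paper glosses it --- but the underlying mechanism is the same.
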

\begin{proof}
 Let $C$ be the connected component of $R \setminus \gamma$ not containing $p$. Let $q,r$ be the intersections of $\gamma$ with $\partial R$. Then $\partial R \setminus \{p,q,r\}$ is a set of three disjoint curves. The curve between $p$ and $q$ as well as between $p$ and $r$ contains a point in $\partial R'$, since $p \notin R'$ and $q,r \in \gamma \subset R'$. Because $R,R'$ are pseudo-disks $q$ and $r$ are the only points in $\partial R \cap \partial R'$ and hence the third curve $\delta = C \cap \partial R$ between $q$ and $r$ is completely contained in $R'$. Since the closed curve $\delta \cup \gamma \subset R'$ is the boundary of $C$ and $\delta \cap \partial R' = \emptyset$, we conclude that $C$ is completely contained in the interior of $R'$.
\end{proof}

\begin{theorem}\label{thm:pseudo-disks}
 If $R_1,\ldots,R_n$ is a collection of pseudo-disks, then non-crossing connectors exist for any finite point sets $P_i \subset R_i$ ($i= 1,\ldots,n$) with $P_i \cap P_j = \emptyset$ for $i \neq j$.
\end{theorem}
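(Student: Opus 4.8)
The plan is to induct on $n$, the number of pseudo-disks, the base case $n \le 1$ being trivial. For the inductive step, the idea is to peel off one region, route its connector, and then ``shrink away'' the part of the plane it occupies so that the remaining $n-1$ regions still form a collection of pseudo-disks with their point sets intact. To make this work I would first pick a region that is in some sense innermost: among $R_1,\ldots,R_n$, choose $R_n$ so that $R_n$ contains no other region $R_i$ entirely, or more carefully, choose a region $R_n$ and a connector $\gamma_n$ for $(R_n,P_n)$ inside $R_n$ such that $\gamma_n$ can be taken ``close to the boundary'' in the cyclic order of the points of $P_n$ on $\partial R_n$ — recall from the introduction that the combinatorial order in which the connector visits its points may be fixed arbitrarily, so I may route $\gamma_n$ as a curve hugging $\partial R_n$ from the inside, i.e., contained in an arbitrarily thin collar neighbourhood of $\partial R_n$ inside $R_n$.

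The heart of the argument is the following claim: there is a choice of $R_n$ and a thin connector $\gamma_n \subset R_n$ so that, after removing a thin closed neighbourhood $N$ of $\gamma_n$ from the plane, the truncated regions $R_i' := R_i \setminus N$ (for $i<n$) are again simply connected and still pairwise $2$-intersecting, and each still contains $P_i$ (this last point is free if $N$ is chosen thin enough and disjoint from the finitely many points $\bigcup_j P_j \setminus P_n$). Here Lemma~\ref{lem:pseudo-disks} is exactly the tool that controls what happens topologically: when $\gamma_n$ (a sub-collar of $\partial R_n$) crosses into a neighbouring pseudo-disk $R_i$, it does so along an arc meeting $\partial R_i$ at most twice, and the lemma guarantees that the piece of $R_i$ cut off on the ``outer'' side — the side toward $\partial R_n$, which is the side we delete — is a topological disk sitting in the interior of $R_n$, hence its removal neither disconnects $R_i'$ nor creates a hole, and the boundary $\partial R_i'$ is obtained from $\partial R_i$ by replacing a single sub-arc with a single sub-arc of $\partial \gamma_n$'s neighbourhood. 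Checking that two such truncated boundaries $\partial R_i'$, $\partial R_j'$ still cross at most twice is the main bookkeeping step: their crossings are either old crossings of $\partial R_i \cap \partial R_j$ that survive, or new crossings created along the inserted $\gamma_n$-arcs, and one shows using the $2$-intersecting hypothesis for the triple $\{R_i,R_j,R_n\}$ that the total cannot exceed two.

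Granting the claim, apply the induction hypothesis to $R_1',\ldots,R_{n-1}'$ with point sets $P_1,\ldots,P_{n-1}$ to obtain non-crossing connectors $\gamma_1,\ldots,\gamma_{n-1}$ with $\gamma_i \subset R_i' \subset R_i$; since each $\gamma_i$ avoids $N \supset \gamma_n$, adding $\gamma_n$ back gives $n$ pairwise disjoint connectors with $\gamma_i \subset R_i$, completing the induction. I expect the main obstacle to be the claim's topological bookkeeping — in particular, verifying that no pair $\partial R_i'$, $\partial R_j'$ acquires a third crossing, and that a valid choice of ``innermost'' $R_n$ together with a suitable visiting order for $P_n$ always exists so that the collar $\gamma_n$ genuinely separates $P_n$ from the region we wish to delete. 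Degenerate configurations (touching points, several regions meeting near a common point, points of $P_n$ lying on $\partial R_n$) will need the local perturbation remarks already made in Section~\ref{sec:problem} to be invoked cleanly.
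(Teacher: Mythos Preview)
Your proposal has a genuine gap at its starting point. You route $\gamma_n$ in an ``arbitrarily thin collar neighbourhood of $\partial R_n$'', but the points of $P_n$ are arbitrary points of $R_n$ and may lie deep in the interior; the freedom to choose the \emph{order} in which $\gamma_n$ visits them (which is all the introduction grants) does not let you move the points themselves. Hence in general no connector for $(R_n,P_n)$ lives in a collar of $\partial R_n$, and once $\gamma_n$ is forced to send spokes into the interior, your application of Lemma~\ref{lem:pseudo-disks}---which needs the arc of $\gamma_n$ inside $R_i$ to meet $\partial R_i$ exactly twice---no longer goes through. A second, independent difficulty is the ``bookkeeping'' you flag: even for a genuinely collar-shaped $\gamma_n$, the truncated boundaries $\partial R_i'$ and $\partial R_j'$ each acquire an arc along $\partial N$, and where both $R_i$ and $R_j$ contained the same piece of $N$ these new arcs coincide rather than cross, so the resulting family need not consist of regions whose boundaries meet in finitely many \emph{crossing} points. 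Turning that back into an honest $2$-intersecting family requires more than a pairwise argument on $\{R_i,R_j,R_n\}$, and you give no indication how.

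The paper's argument avoids both issues by not trying to preserve the pseudo-disk property through an induction. It orders the regions so that each $R_i$ has a private point $p_i \notin R_1\cup\cdots\cup R_{i-1}$, draws $\gamma_1,\ldots,\gamma_{i-1}$ first, and then observes that the components of $R_i\setminus\bigcup_{j<i}\gamma_j$ form a tree rooted at the component $C_0$ containing $p_i$. Lemma~\ref{lem:pseudo-disks} is invoked not to control truncated regions but to certify that each tree edge---a sub-arc of some earlier $\gamma_{j^*}$ meeting $\partial R_i$ in two points---bounds a component lying in the \emph{interior} of $R_{j^*}$; this is exactly what allows $\gamma_{j^*}$ to be rerouted around the points of $P_i$ while remaining inside $R_{j^*}$. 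After finitely many such reroutings all of $P_i$ lies in $C_0$ and $\gamma_i$ can be drawn there. No inductive hypothesis about the remaining regions being pseudo-disks is ever needed beyond the original one.
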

\begin{proof}
 The proof is constructive. Let $R_1,\ldots,R_n$ be a collection of pseudo-disks and $P_i$ a finite subset of $R_i$ for $i=1,\ldots,n$. We assume w.l.o.g. that every $\partial R_i$ is a closed polygonal curve of finite complexity. Moreover, we assume that the regions are labeled from $1,\ldots,n$, such that for every $i = 2,\ldots,n$ the set $R_i \backslash (R_1 \cup \cdots \cup R_{i-1})$ is non-empty, i.e., contains some point $p_i$. For example, we may order the regions by non-decreasing $x$-coordinate of their rightmost point. Note that rightmost points of pseudo-disks do not coincide. For simplicity we add $p_i$ to $P_i$ for every $i=1,\ldots,n$ (and denote the resulting point set again by $P_i$). Clearly, every collection of non-crossing connectors for the new point sets is good for the original point sets, too.

 We start by defining a connector $\gamma_1$ for $(R_1,P_1)$ arbitrarily, such that $P_1 \subset \gamma_1 \subset R_1$, $\gamma_1 \cap P_i = \emptyset$ for every $i \geq 2$, and $\gamma_1$ is a polyline of finite complexity. To keep the number of operations in the upcoming construction finite we concider polylines of finite complexity only. That is, whenever we define a curve we mean a polyline of finite complexity even if we do not explicitly say so.

 For $i=2,\ldots,n$ assume that we have non-crossing connectors $\gamma_1,\ldots,\gamma_{i-1}$, such that $(\bigcup_{j<i} \gamma_j) \cap (\bigcup_{k \geq i} P_k) =\emptyset$. We want to define a connector $\gamma_i$ for $(R_i,P_i)$. The set $R_i \setminus (\bigcup_{j<i} \gamma_j)$ has finitely many connected components $\{C_k\}_{k\in K}$ with $|K| < \infty$. Every point in $P_i$ is contained in exactly one $C_k$. Let $C_0$ be the component containing the additional point $p_i \in P_i$. The informal idea is the following. We reroute some of the existing connectors until $P_i$ is completely contained in $C_0$. Then, we define a connector $\gamma_i$ for $(R_i,P_i)$ arbitrarily, such that $P_i \subset \gamma_i \subset C_0$, as well as $\gamma_i \cap P_j = \emptyset$ for $j > i$ and $\gamma_i \cap \gamma_j = \emptyset$ for $j<i$. The reader may consider Figure~\ref{fig:pseudo-disks} for an illustration of the upcoming operation. For better readability the parts of connectors are omitted in Figure~\ref{fig:pseudo-disks}, which have an endpoint in the interior of $R_i$. However, those, as well as the point sets $P_j$ with $j > i$, will be circumnavigated by the curve $\delta$.

 \begin{figure}[htb]
  \centering
  \includegraphics{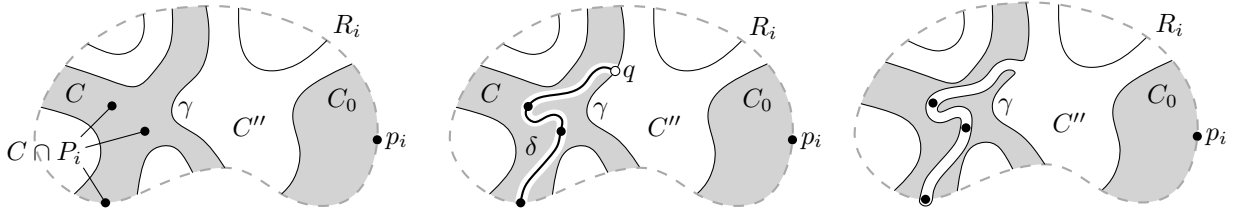}
  \caption{Rerouting the curve $\gamma$ bordering the components $C$ and $C''$, such that the subset of $P_i$ formerly contained in $C$ is contained in $C''$ afterwards.}
  \label{fig:pseudo-disks}
 \end{figure}

 Every connector $\gamma_j$ for $j<i$ is a simple curve. Thus every connected component of $R_i \setminus (\bigcup_{j<i} \gamma_j)$ contains a point from the boundary of $R_i$, i.e., the adjacency graph between the components is a tree $T$ on vertex set $\{C_k\}_{k\in K}$, which we consider to be rooted at $C_0$. Let $C \neq C_0$ be a component, such that $C\cap P_i \neq \emptyset$ but $C'\cap P_i = \emptyset$ for every descendant $C'$ of $C$ in $T$. Let $\gamma$ be the curve in $R_i$ that forms the border between $C$ and its father $C''$ in $T$, i.e., $\gamma$ intersects $\partial R_i$ only at its endpoints and is a subset of some connector $\gamma_{j^*}$ for $(R_{j^*},P_{j^*})$. In particular, $j^* < i$ and hence $p_i \notin R_{j^*}$. Applying Lemma~\ref{lem:pseudo-disks} with $p = p_i$ we get that $C$ is contained in the interior of $R_{j^*}$.

 Let $q \notin P_{j^*}$ be any interior point of $\gamma$ and $\delta$ be any curve with endpoint $q$, such that $(P_i \cap C) \subset \delta \subset (C \cup \{q\}) \subset R_{j^*}$, as well as $\delta \cap (\bigcup_{j<i} \gamma_j) = \{q\}$ and $\delta \cap (\bigcup_{j\neq i} P_j) = \emptyset$. We reroute $\gamma_{j^*}$ within a small distance around $\delta$. More formally, define a simply connected set $D \supset \delta$ to be a thickening of the curve $\delta$ by some small $\varepsilon >0$, such that $D$ is still contained in $R_{j^*} \setminus (\bigcup_{j \neq i} P_j \cup \bigcup_{j<i} \gamma_j)$. Note that $D \nsubseteq C$ if $P_i$ (and hence $\delta$) contains points on the boundary of $R_i$. However, we can ensure that $D \subset R_{j^*}$ since $C$ lies in the \emph{interior} of $R_{j^*}$. Moreover, we can choose $\varepsilon$ small enough, such that $\partial D$ intersects $\gamma$ only in two points $q_1$ and $q_2$, which are $\varepsilon$-close to $q$.

 Next, the part of $\gamma$ between $q_1$ and $q_2$ is replaced by the part of $\partial D$ between $q_1$ and $q_2$ that runs through $C$. This rerouting of $\gamma$ (and implicitly the connector $\gamma_{j^*}$) may (or may not) change the subtree of $T$ rooted at $C''$. But it does not affect any component of $R_i \setminus \bigcup_{j<i} \gamma_j$ that is not in this subtree. Moreover, $C''$ is extended by $D \cap C$, which contains all points in $P_i \cap C$. Hence, the so-to-speak total distance of the points in $P_i$ from $C_0$ in $T$ is decreased. After finitely many steps we have $P_i \subset C_0$ and thus can define the connector $\gamma_i$ for $(R_i,P_i)$.
\end{proof}

\section{Axis-Aligned Rectangles}\label{sec:rectangles}

Throughout this section the regions $R_1,\ldots,R_n$ are given as axis-aligned rectangles. Whenever we consider some axis-aligned rectangle $R$ we consider it as a closed set, i.e., $R = [x_1,x_2] \times [y_1,y_2]$ for some $x_1 < x_2$ and $y_1 < y_2$. In particular, we always take the closure without explicitly saying so.

\begin{definition}
 Two rectangles $R_i,R_j$ form a \emph{cross} (and $R_i$ and $R_j$ are called \emph{crossing}) if they are $4$-intersecting, and form a \emph{filled cross} if additionally both connected components of $R_i \backslash R_j$ contain a point from $P_i$, and both connected components of $R_j \backslash R_i$ contain a point from $R_j$.
\end{definition}

Note that if $R_i$ and $R_j$ are crossing, then every connected component $C$ of $(R_i \cup R_j) \setminus (R_i \cap R_j)$ is an axis-aligned rectangle, and hence we consider its closure in the above definition. Obviously, non-crossing connectors do not exist if some pair of rectangles is a filled cross. In other words, the absence of filled crosses is a necessary condition for the existence of non-crossing connectors. Next, we show that this condition is also sufficient.

\begin{theorem}\label{thm:rectangles}
 A set of axis-aligned rectangles admits a set of non-crossing connectors if and only if it does not contain a filled cross.
\end{theorem}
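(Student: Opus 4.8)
The plan is to prove the nontrivial direction: if a set of axis-aligned rectangles contains no filled cross, then non-crossing connectors exist. I would argue by induction on the number of crossing pairs, the base case being a collection of rectangles in which no two rectangles cross. In that base case, any two rectangles $R_i,R_j$ with $R_i\cap R_j\neq\emptyset$ are either nested or meet in an ``L-shaped'' or ``corner'' intersection where at most one of the two components $R_i\setminus R_j$, $R_j\setminus R_i$ is cut off from the rest of its rectangle; in fact when no pair crosses, the boundaries of every two rectangles intersect at most twice after the usual local perturbation of touching points, so the collection is $2$-intersecting and Theorem~\ref{thm:pseudo-disks} applies directly. (One must check that a $0$- or $2$-intersecting family of rectangles is exactly a collection of pseudo-disks, which is immediate from the definitions in Section~\ref{sec:problem}.)

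For the inductive step, pick a crossing pair $R_i,R_j$. Since it is not a filled cross, one of the four components, say a component $C$ of $R_i\setminus R_j$, contains no point of $P_i$ (the argument is symmetric for a component of $R_j\setminus R_i$). The idea is that $C$ is ``useless'' for $R_i$: any connector for $(R_i,P_i)$ can avoid $C$, because $C$ is attached to the rest of $R_i$ only along one side of $R_i\cap R_j$ and $P_i$ lies entirely in $R_i\setminus C$, which is itself an axis-aligned rectangle. So I would replace $R_i$ by the smaller axis-aligned rectangle $R_i':=\overline{R_i\setminus C}$. This shrinking can only destroy intersections of boundaries, never create new crossing points, so the number of crossing pairs strictly decreases (the pair $R_i,R_j$ is no longer crossing, as $R_i'$ now meets $R_j$ in at most an L-shape), and no new filled cross can appear. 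A short case analysis is needed to confirm that shrinking $R_i$ to $R_i'$ cannot turn a non-filled cross involving $R_i$ and some third rectangle $R_k$ into a filled cross, and cannot create a filled cross among pairs not previously crossing — but this follows because any point of $P_i$ that mattered is still present and still in $R_i'$, and $R_i'\subset R_i$ so $R_i'$ meets $R_k$ in a subset of what $R_i$ did. By induction the shrunken instance has non-crossing connectors; since $R_i'\subset R_i$, these are also valid connectors for the original instance, with $P_i\subset\gamma_i\subset R_i'\subset R_i$.

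The main obstacle I anticipate is the bookkeeping in the shrinking step: verifying carefully that after replacing $R_i$ by $R_i'$, (a) every other pair that was not a filled cross is still not a filled cross, and in particular that we have not accidentally created a filled cross between $R_i'$ and some $R_k$ that was crossing $R_i$ before, and (b) the crossing count genuinely drops. Point (a) is the delicate one, because a component of $R_k\setminus R_i$ that used to contain a point of $P_k$ could in principle change shape when $R_i$ shrinks; one has to observe that shrinking $R_i$ only enlarges $R_k\setminus R_i$, so a component of $R_k\setminus R_i$ with a point of $P_k$ still has one, and a component that was point-free may disappear or grow but its point-freeness is what we rely on for $R_k$ only via the pair $(R_k,R_j')$ etc., so the non-filled-cross condition is preserved in every pair. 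Once this monotonicity observation is in place, the induction closes cleanly, and one also gets a polynomial-time algorithm: repeatedly find a crossing pair that is not a filled cross (there must be one, else either there is a filled cross or no crossing pair at all), shrink, and finish by the pseudo-disk construction of Theorem~\ref{thm:pseudo-disks}, which is itself constructive.
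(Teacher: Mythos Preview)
Your overall strategy---iteratively shrink a rectangle to kill a cross, then invoke Theorem~\ref{thm:pseudo-disks} once no crosses remain---is exactly the paper's approach. However, there is a genuine gap in the inductive step: the claim that ``shrinking can only destroy intersections of boundaries, never create new crossing points'' is false, and with it both parts (a) and (b) of your bookkeeping fail.

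The problem is that $R_i' = \overline{R_i\setminus C}$ has one \emph{new} side, namely the side of $R_j$ along which $C$ was attached; this new side can meet the boundary of a third rectangle $R_k$ in two fresh points. Concretely, take
\[
R_i=[0,10]\times[0,2],\qquad R_j=[5,6]\times[-1,3],\qquad R_k=[-1,7]\times[0.5,1.5].
\]
Here $\{R_i,R_j\}$ and $\{R_j,R_k\}$ are crosses but $\{R_i,R_k\}$ is not (their boundaries meet only in the two points $(0,0.5)$ and $(0,1.5)$). If the right component $C=[6,10]\times[0,2]$ of $R_i\setminus R_j$ is point-free and you replace $R_i$ by $R_i'=[0,6]\times[0,2]$, then $\{R_i',R_k\}$ \emph{is} a cross: the new right side $x=6$ of $R_i'$ meets $\partial R_k$ at $(6,0.5)$ and $(6,1.5)$. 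With a second such $R_k$ the total number of crosses strictly increases, so your induction does not terminate. Worse, choosing $P_i=\{(3,0.05),(3,1)\}$ and $P_k=\{(-0.5,1),(6.5,1)\}$ (and $P_j$ entirely above $R_k$) one checks that the original instance has no filled cross, yet $\{R_i',R_k\}$ is a \emph{filled} cross; so the shrunken instance need not even satisfy the hypothesis, and the monotonicity argument you sketch (``shrinking $R_i$ only enlarges $R_k\setminus R_i$'') does not rescue it, because the relevant pair $\{R_i,R_k\}$ was not a cross before.

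The paper closes this gap by not picking an arbitrary cross but one whose intersection $R_i\cap R_j$ is \emph{inclusion-minimal} among all crosses. With that choice, any rectangle $R_k$ that would form a new cross with the shrunken rectangle is forced to cross $R_i$ with $R_i\cap R_k\subsetneq R_i\cap R_j$, contradicting minimality (in the example above, indeed $\{R_j,R_k\}$ is a cross with $R_j\cap R_k\subsetneq R_i\cap R_j$, so $\{R_i,R_j\}$ would never have been selected). This single extra idea---plus a small $\varepsilon$-shift so the new side does not coincide with a side of $R_j$---is what your plan is missing.
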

\begin{proof}
 The ``only if''-part is immediate. We prove the ``if''-part by applying Theorem~\ref{thm:pseudo-disks}. To this end we consider axis-aligned rectangles $R_1,\ldots,R_n$ such that no two of them form a filled cross. If there is no cross at all, then the rectangles are pseudo-disks and non-crossing connectors exist by Theorem~\ref{thm:pseudo-disks}. So assume that some pair of rectangles is a cross, but \emph{not} a filled cross. Consider such a cross $\{R_i = [x^i_1,x^i_2] \times [y^i_1,y^i_2], R_j = [x^j_1,x^j_2] \times [y^j_1,y^j_2]\}$ where $R_i \cap R_j$ is inclusion-minimal among all crosses. W.l.o.g. assume that $R_i \cap R_j = [x^j_1,x^j_2] \times [y^i_1,y^i_2]$ and the connected component $C = [x^j_1,x^j_2] \times [y^i_2,y^j_2]$ of $R_j \setminus R_i$ contains no point from $P_j$. The situation is illustrated in Figure~\ref{fig:rectangles}. Figuratively speaking we chop off $C$ (actually a slight superset $C'$ of $C$) from $R_j$ in order to reduce the total number of crosses of all rectangles. More precisely, choose $\varepsilon > 0$ small enough that $C' := [x^j_1,x^j_2] \times [y^i_2 - \varepsilon, y^i_2]$ contains no point from $P_j$, and that the $y$-coordinate of no corner of a rectangle $\neq R_i$ lies between $y^i_2 -\varepsilon$ and $y^i_2$. We replace $R_j$ by $\tilde{R_j} := R_j \setminus C'$.

 \begin{figure}[htb]
  \centering
  \includegraphics{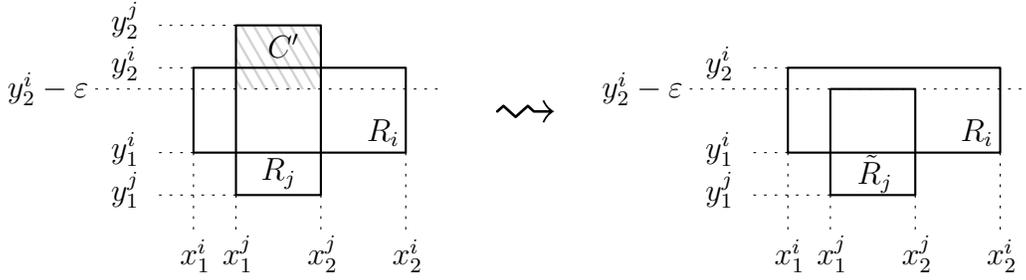}
  \caption{Chopping off $C'$ with $C' \cap P_j = \emptyset$ from the rectangle $R_j$ to obtain $\tilde{R}_j$.}
  \label{fig:rectangles}
 \end{figure}

 We claim that every rectangle $R_k$ crosses $\tilde{R_j}$ only if it crosses $R_j$, too. Indeed, every new intersection point of $\partial \tilde{R_j}$ with $\partial R_k$ lies on the segment $[x^j_1,x^j_2] \times (y^i_2 - \varepsilon)$. If $\{R_k,\tilde{R_j}\}$ is a cross, there are two further intersection points of $\partial R_k$ with $\partial \tilde{R_j}$ on the segment $[x^j_1,x^j_2] \times y^j_1$. By the choice of $\varepsilon$ and to position of $R_i$ the pair $\{R_i,R_k\}$ would be a cross, too, but with $R_i \cap R_k \subset R_i \cap R_j$, which contradicts the inclusion-minimality of $R_i \cap R_j$ and proves our claim.

 We proved that no cross is created by the chopping operation described above, and thus there is no \emph{filled} cross among the rectangles $R_1,\ldots,R_{j-1},\tilde{R_j},R_{j+1},\ldots,R_n$. Since $\{R_i,\tilde{R_j}\}$ is not a cross while $\{R_i,R_j\}$ is one, it follows that the total number of crosses has decreased by at least one. Repeating the procedure at most $\binom{n}{2}$ times finally results in a collection of axis-aligned rectangles, which are subsets of the original rectangles, and contain no cross at all. By Theorem~\ref{thm:pseudo-disks} non-crossing connectors exist for the smaller rectangles, which are good for the original rectangles, too.
\end{proof}

\begin{corollary}\label{cor:rectangles}
 It can be tested in $\mathcal{O}(n^2)$ whether or not a set of $n$ axis-aligned rectangles admits a set of non-crossing connectors.
\end{corollary}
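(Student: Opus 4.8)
The plan is to turn Theorem~\ref{thm:rectangles} directly into an algorithm: a family of axis-aligned rectangles admits non-crossing connectors if and only if it contains no filled cross, so it suffices to decide whether some pair $\{R_i,R_j\}$ forms a filled cross, and the whole point of the proof is to show that each of the $\binom{n}{2}$ pairs can be handled in $\mathcal{O}(1)$ time after a cheap preprocessing of the point sets.

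First I would preprocess. For every $i$ compute, in one pass over $P_i$, the four numbers $x_i^-=\min\{p_x:p\in P_i\}$, $x_i^+=\max\{p_x:p\in P_i\}$, $y_i^-=\min\{p_y:p\in P_i\}$, $y_i^+=\max\{p_y:p\in P_i\}$; altogether this is linear in the total number of input points, which we may assume is $\mathcal{O}(n^2)$ (otherwise one simply adds that term, but it is in any case subsumed). The key observation is that these four values capture everything about $P_i$ that the filled-cross condition can ask about: whenever $R_i$ is cut by a cross, it is cut either by two vertical lines $x=a_1<a_2$ or by two horizontal lines, and a component of $R_i$ outside the strip $R_i\cap R_j$ contains a point of $P_i$ exactly when the relevant extreme coordinate lies on the correct side of $a_1$ resp.\ $a_2$. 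Equivalently, one may replace each $P_i$ by the at most four points realizing these extrema without changing the answer.

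Then I would iterate over all $\binom{n}{2}$ pairs. For a pair $\{R_i,R_j\}$, comparing their eight defining coordinates decides in $\mathcal{O}(1)$ time whether they form a cross, and if so which of the two rectangles is split by two vertical lines and which by two horizontal lines; say $R_i\cap R_j=[a_1,a_2]\times[b_1,b_2]$ where $a_1,a_2$ are the two $x$-coordinates of $R_j$ and $b_1,b_2$ the two $y$-coordinates of $R_i$ (this is exactly the normalization used in the proof of Theorem~\ref{thm:rectangles}). Then $\{R_i,R_j\}$ is a filled cross precisely when $x_i^-\le a_1$, $x_i^+\ge a_2$, $y_j^-\le b_1$ and $y_j^+\ge b_2$ (with the obvious, correctness-irrelevant care about points lying on the strip boundary). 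This is $\mathcal{O}(1)$ per pair, so the search runs in $\mathcal{O}(n^2)$; output ``no'' if some pair is a filled cross, and ``yes'' otherwise, which is correct by Theorem~\ref{thm:rectangles}.

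There is no serious obstacle here, so the only thing worth stressing is the reason the bound is not simply the trivial one: a naive evaluation of the filled-cross condition would rescan the point set $P_i$ for every cross $R_i$ takes part in, costing $\mathcal{O}(n^2\cdot\max_i|P_i|)$. The extreme-coordinate preprocessing (equivalently, the reduction of each $P_i$ to four points) is exactly what replaces these scans by constant-time comparisons and brings the running time down to $\mathcal{O}(n^2)$; correctness of both the preprocessing and the final decision is immediate from the definition of a filled cross together with Theorem~\ref{thm:rectangles}.
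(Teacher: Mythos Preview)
Your approach is essentially the same as the paper's: invoke Theorem~\ref{thm:rectangles} and test all $\binom{n}{2}$ pairs for being a filled cross. The paper's own proof is a two-line remark and does not spell out why each pair can be handled in constant time; your extreme-coordinate preprocessing is exactly the missing implementation detail, and it is correct for the reason you give (the components of $R_i\setminus R_j$ in a cross are axis-aligned halves, so containment of a point from $P_i$ is determined by a single extreme coordinate).
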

\begin{proof}
 By Theorem~\ref{thm:rectangles} we only have to check for every pair of rectangles whether they form a cross. If so the answer is 'No' and if not the answer is 'Yes'.
\end{proof}

\section{NP-Completeness}\label{sec:NP-complete}
In this section we prove NP-completeness of the non-crossing connectors problem. By Proposition~\ref{prop:in-NP} the problem is in NP. We prove NP-hardness, even if the regions and their point sets are very restricted. Let us remark that most of the technicalities in the upcoming reduction, including the use of zones and segment gadgets, are due to the fact that we use \emph{convex} regions only. Dropping convexity but keeping all the other restrictions allows for a much shorter and less technical proof. However, due to space limitations we present only the more technical reduction with convex sets.

\begin{theorem}\label{thm:NP-hard-convex}
 The non-crossing connectors problem is NP-complete, even if the regions are $4$-in\-ter\-sec\-ting convex polygons with at most $8$ corners and for every $i=1,\ldots,n$ the set $P_i$ consists of only two points on the boundary of $R_i$.
\end{theorem}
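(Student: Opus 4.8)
Since Proposition~\ref{prop:in-NP} already puts the problem in NP, it remains to prove NP-hardness under the stated restrictions. I would reduce from \textsc{Planar 3-SAT} (Lichtenstein's variant, where the variable--clause incidence graph is planar and the variables can be laid out on a horizontal line with clauses attached from above and below); a monotone or $1$-in-$3$ planar variant would work equally well and only changes the clause gadget slightly. Given a planar formula $\varphi$, I first fix a planar rectilinear drawing of its incidence graph on a polynomial-size grid, and then replace variables, edges and clauses by local gadgets, each built from convex polygons with at most $8$ corners and a two-point boundary set, so that non-crossing connectors exist for the whole family if and only if $\varphi$ is satisfiable.

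The signal carried along the drawing is a single binary datum: a distinguished connector in a narrow ``corridor'' can be routed in one of two homotopically distinct ways, encoding \textsc{True} or \textsc{False}. Because regions must be \emph{convex} (and any two of them may cross only four times), a corridor that follows a bent path of the drawing cannot be a single region; instead I would tile it by a linear sequence of short, thin convex cells --- the \emph{segment gadgets} --- consecutive cells overlapping in a $4$-intersecting fashion, together with a few auxiliary ``blocker'' connectors that pin the corridor into a zigzag with exactly two passages through each cell. The local statement to prove for one cell is that the connectors already present partition the cell into precisely two faces, each incident to both endpoints of the signal connector, and that these two faces glue consistently from cell to cell; this is a Jordan-curve / plane-graph connectivity argument, and it also keeps each region convex with a bounded number of corners and keeps all interactions pairwise simple (the ``zones'' serve to localize which regions may overlap, so the global $4$-intersecting condition is maintained). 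Turning a corner, merging, splitting and \emph{negating} a signal (swapping the two passages) are realized by special cells of the same type.

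For the variable gadget I would run a closed corridor around the variable's position, whose two global consistent states are exactly the two truth values, and tap off one signal per occurrence, inserting a negation cell for negative literals. For the clause gadget, the three incoming signal connectors enter a small convex region $R_c$ carrying a two-point set $\{a_c,b_c\}$ placed so that a connector $\gamma_c$ from $a_c$ to $b_c$ exists inside the remaining free space if and only if at least one of the three literal connectors is routed ``away'' from $R_c$, i.e.\ set to \textsc{True}; the all-\textsc{False} routing is arranged so that the three literal connectors jointly disconnect $a_c$ from $b_c$ within $R_c$. Again this is a small connectivity lemma about the constantly many regions meeting near $R_c$. Putting the pieces together: the number of cells is linear in the size of the drawing, so the instance is polynomial; from a satisfying assignment one routes every signal connector according to its truth value and every other connector along its canonical free path to obtain non-crossing connectors; conversely, non-crossing connectors induce a consistent state in each variable corridor, which by the segment and clause lemmas is a satisfying assignment.

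\textbf{Main obstacle.} The difficulty is entirely in the gadget engineering under the convexity and $4$-intersecting constraints. For general simply connected regions a connector can wind around obstacles freely, so a bistable choice is easy to create; here every binary decision must be \emph{forced} by the interaction of a bounded number of convex pieces whose boundaries pairwise cross at most four times, which is why the zone/segment machinery is needed at all. I expect the most delicate points to be (i) realizing the negation (passage-swap) cell and the clause junction with convex $8$-gons while still guaranteeing that no two regions of the whole construction acquire a fifth common boundary point, and (ii) checking that the two per-cell faces really glue into exactly two global states along a bent, possibly branching corridor rather than collapsing to one state or splitting into spurious extra states.
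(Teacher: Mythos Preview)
Your plan follows the same overall strategy as the paper: a polynomial reduction from \textsc{Planar 3-SAT}, with the crucial observation that convexity forces long signal paths to be tiled by a chain of short \emph{segment gadgets} rather than realized by a single bent region, and with a local clause gadget in which an all-\textsc{False} routing disconnects a designated pair of points. You have also correctly identified the real difficulty, namely engineering bistable cells and the clause junction with convex $8$-gons whose boundaries pairwise cross at most four times.

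Where the paper differs from your outline is mainly in the implementation choices. First, the paper uses a $3$-slope drawing (slopes $0^\circ,30^\circ,60^\circ$) of the degree-$3$ incidence graph rather than a rectilinear one; this is what makes the concrete convex $8$-gons in Figure~\ref{fig:convex-gadgets} fit together at $60^\circ$ bends while remaining pairwise $4$-intersecting. Second, the paper's signal is not ``which of two passages the connector takes'' but rather ``whether a \emph{red} connector stays inside or leaves the \emph{black} region of its gadget''; each segment gadget has four regions (black, blue, two red) and each clause gadget five (black, blue, three red), with the blue region creating the obstruction that forces at least one red connector to stay inside (Lemmas~\ref{lem:satisfy-clause}, \ref{lem:satisfy-segment}). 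Third --- and this is the biggest structural difference --- the paper has \emph{no} separate variable gadget and \emph{no} negation cell. Instead, for a variable $x$ appearing in clauses $c_1,c_2,c_3$ with $c_1,c_2$ of the same sign, it runs one polyline of segment gadgets from $c_1$ to $c_3$ and branches a second polyline off toward $c_2$; consistency of the truth value is then a consequence of Lemmas~\ref{lem:segment-intersection} and~\ref{lem:segment-3-intersection}, which say that at each overlap of two (or three) segment gadgets only one side can be ``satisfied''. This sidesteps your obstacle~(i): there is no passage-swap cell to design. Your closed-corridor variable with explicit negation cells would also work in principle, but you would then have to exhibit a convex, $4$-intersecting negation cell, which is exactly the delicate point you flagged.
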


We prove Theorem~\ref{thm:NP-hard-convex} by a polynomial reduction from \textsc{planar 3-SAT}. In the \textsc{3-SAT} problem we are given a formula $\psi$ in conjunctive normal form where each clause has at most $3$ literals, i.e., positive or negated variables. The \emph{formula graph $G_\psi$} is the bipartite graph whose vertex set is the union of clauses and variables, and whose edge set $E$ is given by $\{x,c\} \in E$ if and only if variable $x$ appears in clause $c$. The \textsc{planar 3-SAT} problem is the \textsc{3-SAT} problem with the additional requirement that $G_\psi$ is a planar graph. It is known~\cite{Lic82} that \textsc{planar 3-SAT} is NP-complete, even if every variable appears in at most $3$ clauses, i.e., $G_\psi$ has maximum degree $3$~\cite{FelKraMidPfe95}.

\medskip
\noindent
\textbf{Zones.} We start by defining a collection of zones, i.e., polygons in the plane, which will later contain our gadgets. It has been proved several times~\cite{Kan93,DujSudWoo05,GanHuKauKob11} that a planar graph with maximum degree $3$ admits a straight line embedding in which every edge has one of the \emph{basic slopes} $0\degree,30\degree$ or $60\degree$, and such an embedding can be computed in linear time. To be precise, possibly three edges on the outer face will have a bend, i.e., not be straight. We take such an embedding of $G_\psi$ and thicken it so that vertices are represented by disks with diameter $\varepsilon$ and edges are rectangles with one side length $\varepsilon$, for some $\varepsilon > 0$ small enough. Let $T$ be an equilateral triangle, smaller than an $\varepsilon$-disk, whose sides have basic slopes and with a tip pointing up. For every clause $c$ let $T(c)$ be a copy of $T$ centered at the position of vertex $c$ in $G_\psi$. We associate the corners of $T(c)$ with the variables in $c$ in the same clockwise order as the corresponding incident edges at vertex $c$ in $G_\psi$. If $c$ has size~$2$, i.e., consists of two variables, then one corner of $T(c)$ is not associated with a variable.

W.l.o.g. every variable $x$ appears at least once positive and once negated in $\psi$. First, let $x$ be a variable that is contained in only two clauses $c_1$ and $c_2$. We connect the corresponding corners of $T(c_1)$ and $T(c_2)$ by a polyline contained in the thickened edges of $G_\psi$, such that each of the following holds. See Figure~\ref{fig:polyline} for an illustration.
\begin{enumerate}[label=\textbf{\alph*)}]
 \item The polyline consists of only constantly many segments, say at most $10$.\label{enum:first}
 \item Every segment has a basic slope, i.e., $0\degree,30\degree$, or $60\degree$.
 \item The angle between any two consecutive segments is $60\degree$.
 \item The first and last segment is attached to the corresponding corner of $T(c_1)$ and $T(c_2)$, respectively, as depicted for the clause $c$ in Figure~\ref{fig:polyline}.\label{enum:last}
\end{enumerate}

\begin{figure}[htb]
 \centering
 \includegraphics{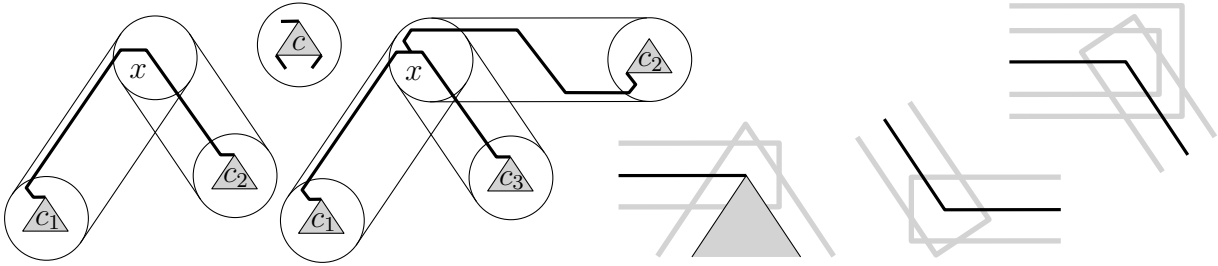}
 \caption{The definition of polylines and zones.}
 \label{fig:polyline}
\end{figure}

\noindent
Let $x$ be a variable that is contained in three clauses $c_1,c_2,c_3$. Let $c_1$ and $c_2$ be the clauses in which $x$ appears with the same sign, i.e., both positive or both negated. We introduce a polyline connecting the corresponding corners of $T(c_1)$ and $T(c_3)$ in the same way as above, i.e., such that \ref{enum:first}--\ref{enum:last} holds. W.l.o.g. assume that one segment $s$ of the polyline is completely contained in the $\varepsilon$-disk for vertex $x$ in $G_\psi$. We introduce a second polyline, which is contained in the thickened edge $\{x,c_2\}$ in $G_\psi$, starts at the endpoint of $s$ that is closer to $c_3$ and ends at the corner of $T(c_2)$ corresponding to $x$. This polyline shall again satisfy \ref{enum:first}--\ref{enum:last}, where in \ref{enum:last} only the last segment is considered, while the first segment shall be one half of segment $s$. See Figure~\ref{fig:polyline} for an illustration.

We denote the two endpoints of every segment by $A$ and $B$, such that whenever two segments $s,s'$ share an endpoint and are not contained in each other, then this endpoint is denoted differently in $s$ and $s'$. Finally we define a zone for every clause and every segment of a polyline. The zone $Z(c)$ of a clause $c$ is an equilateral triangle slightly larger then $T(c)$. The zone $Z(s)$ of a segment $s$ is a thin and long rectangle containing $s$ and with two sides parallel to $s$. How two zones intersect is given in the right of Figure~\ref{fig:polyline}. In the particular case that one segment $s$ is contained in another segment $s'$ we define the zones such that $Z(s)$ is contained in $Z(s')$ as well. This completes the definition of zones. In the remainder of this proof we neither need $\varepsilon$-disks, nor polylines any more. We proceed by defining one gadget, i.e., a set of regions, for every zone.

\medskip
\noindent
\textbf{Clause Gadget.} We define the clause gadget, which consists of $5$ regions as depicted in Figure~\ref{fig:clause-gadget}. For better readability not all regions are drawn convex in the figure. However, for the actual reduction we use the combinatorially equivalent \emph{convex} regions depicted in Figure~\ref{fig:convex-gadgets}.

\begin{figure}[htb]
 \centering
 \includegraphics{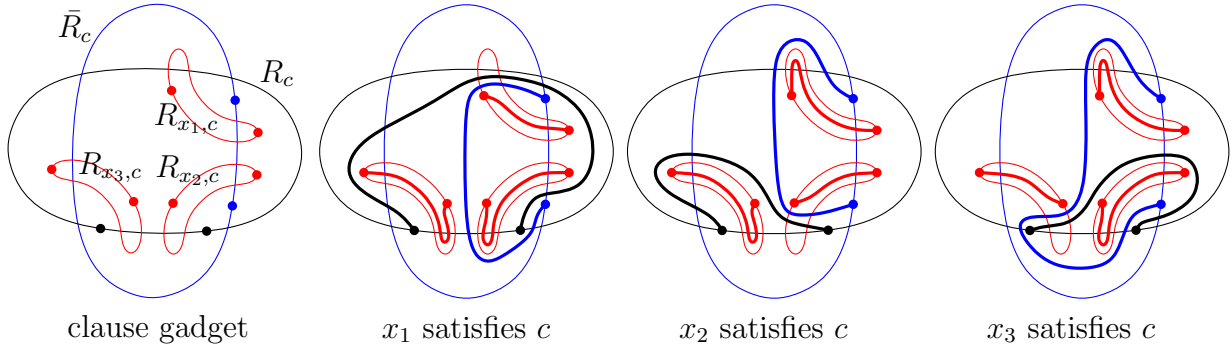}
 \caption{The clause gadget.}
 \label{fig:clause-gadget}
\end{figure}

For every clause $c$ we define a black region $R_{black}(c)$ and a blue region $R_{blue}(c)$, which are $4$-intersecting. The colors are added just for better readability of the figures. We color the $2$-element point set corresponding to every region in the same color as the region. Assume $c$ has size $3$. We define $3$ pairwise disjoint red regions $R_{red}(x,c)$, one for each variable $x$ in $c$, such that the regions appear in the same clockwise order as the edges $\{x,c\}$ around $c$ in $G_\psi$. For every pair $\{x,c\}$ of a variable $x$ and a clause $c$ containing $x$ the red region $R_{red}(x,c)$ has one component inside the black region $R_{black}(c)$, which contains both red points, and one outside $R_{black}(c)$, which does not contain a red point. If the connector $\gamma_{red}(x,c)$ for $R_{red}(x,c)$ is completely contained in $R_{black}(c)$, we say that \emph{variable $x$ satisfies clause $c$}. If the clause has size $2$, then only two of the red regions are associated with the variables. Moreover we put the point of the third red region, which is contained in the blue region $R_{blue}(c)$, anywhere outside the black region $R_{black}(c)$ instead of inside $R_{black}(c)$. Hence this ``artificial variable'' can not satisfy the clause.

It is not difficult to see that the next lemma holds, i.e., Figure~\ref{fig:clause-gadget} verifies the second part of it.

\begin{lemma}\label{lem:satisfy-clause}
 In every set of non-crossing connectors for the $5$ regions of a clause gadget at least one variable satisfies the clause. Moreover, non-crossing connectors do exist as soon as one variable satisfies the clause.
\end{lemma}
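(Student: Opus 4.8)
The plan is to read both directions off the topology forced by $R_{black}(c)$ and $R_{blue}(c)$ being $4$-intersecting. I will use throughout that $\partial R_{black}(c)\cap\partial R_{blue}(c)$ consists of four crossing points, so that $R_{black}(c)\setminus R_{blue}(c)$ and $R_{blue}(c)\setminus R_{black}(c)$ each have exactly two components (``lobes''), and that, as placed in Figure~\ref{fig:clause-gadget}, the two black points lie in the two lobes of $R_{black}(c)\setminus R_{blue}(c)$ and the two blue points in the two lobes of $R_{blue}(c)\setminus R_{black}(c)$.

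For the second assertion (non-crossing connectors exist as soon as some variable $x$ satisfies $c$) I would simply exhibit the drawing. Route $\gamma_{red}(x,c)$ inside the inner component of $R_{red}(x,c)$, which lies in $R_{black}(c)$ and contains both of its red points, and route each of the other two red connectors through the outer component of its region, i.e., let it leave $R_{black}(c)$. By construction these two excursions vacate two of the three ``channels'' of the gadget, leaving enough room to draw $\gamma_{black}(c)$ between the two black points and $\gamma_{blue}(c)$ between the two blue points meeting neither each other nor any red connector; pairwise disjointness is then a finite check against Figure~\ref{fig:clause-gadget}. For a size-$2$ clause the artificial red region plays the role of a non-satisfying variable and the same drawing works.

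For the first assertion I would argue by contradiction: suppose a set of non-crossing connectors exists in which no variable satisfies $c$, i.e., every red connector leaves $R_{black}(c)$. Since $\gamma_{black}(c)\subset R_{black}(c)$ joins the two lobes of $R_{black}(c)\setminus R_{blue}(c)$, it traverses $R_{black}(c)\cap R_{blue}(c)$; and since $\gamma_{blue}(c)$ joins the two lobes of $R_{blue}(c)\setminus R_{black}(c)$, the set $\gamma_{blue}(c)\cap R_{black}(c)$ contains an arc $\beta$ joining the two arcs of $\partial R_{black}(c)$ lying inside $R_{blue}(c)$. The arcs $\gamma_{black}(c)$ and $\beta$ are disjoint (connectors do not cross), so they cut the disk $R_{black}(c)$ into three faces $F_1,F_2,F_3$; the gadget is designed so that the inner component of the $i$-th red region, together with the boundary arc of $R_{black}(c)$ through which the outer component of that region leaves $R_{black}(c)$, sits on the boundary of a prescribed $F_i$. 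Following the three red connectors, each of which must reach its exit arc while avoiding $\gamma_{black}(c)$, $\beta$ and the other two red connectors, one assembles from sub-arcs of these curves and of $\partial R_{black}(c)$ a Jordan curve that separates one endpoint of $\gamma_{black}(c)$ (or of $\gamma_{blue}(c)$) from the other --- impossible. Hence at least one red connector stays inside $R_{black}(c)$, and by the remark on the artificial region this is the connector of a real variable, so some variable satisfies $c$.

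The step I expect to be the main obstacle is the last one: making the separating-curve argument rigorous, including ruling out routings in which $\gamma_{blue}(c)$ crosses $\partial R_{black}(c)$ more than twice (so that $\gamma_{blue}(c)\cap R_{black}(c)$ has several components) or in which a red connector weaves among several of the faces $F_1,F_2,F_3$. I expect this to reduce to a short, mechanical case analysis once the three faces and the three exit arcs have been named explicitly from Figure~\ref{fig:clause-gadget}.
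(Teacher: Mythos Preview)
The paper does not actually prove this lemma: the text immediately preceding it reads ``It is not difficult to see that the next lemma holds, i.e., Figure~\ref{fig:clause-gadget} verifies the second part of it,'' and no further argument is given. Your proposal is therefore strictly more detailed than the paper's own treatment.

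For the existence direction you do exactly what the paper does --- point at the drawing in Figure~\ref{fig:clause-gadget} --- so there is nothing to compare.

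For the necessity direction you supply a topological argument (cutting $R_{black}(c)$ by $\gamma_{black}(c)$ and a sub-arc $\beta$ of $\gamma_{blue}(c)$ into three faces, then arguing that the three red exits cannot all be reached) that the paper omits entirely. The outline is sound; the one point to be careful about is your sentence ``the gadget is designed so that the inner component of the $i$-th red region \ldots\ sits on the boundary of a prescribed $F_i$.'' The faces $F_1,F_2,F_3$ are not prescribed by the gadget --- they depend on how $\gamma_{black}(c)$ and $\beta$ happen to be drawn --- so what you really need is that the two black points, the two blue boundary arcs, and the three red exit arcs alternate along $\partial R_{black}(c)$ in a fixed cyclic pattern, and then a short case split over which of the two sides of $\gamma_{black}(c)$ the arc $\beta$ lies in. You already flag this as the step needing care; once the cyclic order on $\partial R_{black}(c)$ is read off the figure, the case analysis is indeed finite and mechanical, exactly as you anticipate. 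In short, your proof is correct in spirit and already more rigorous than what the paper offers.
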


\medskip
\noindent
\textbf{Segment Gadget.} For every segment of a polyline we define another gadget, which is very similar to the clause gadget described above and depicted in Figure~\ref{fig:clause-gadget-2}. Actually, we define two possible gadgets, where the second arises from the first by vertically mirroring the left half of it. For every segment $s$ we again define two $4$-intersecting regions, $R_{black}(s)$ is black and $R_{blue}(s)$ is blue. We further define two disjoint red regions $R_{red}(A,s)$ and $R_{red}(B,s)$, associated with the endpoints $A,B$ of the segment $s$. Again, each of $R_{red}(A,s),R_{red}(B,s)$ is divided into the part inside $R_{black}(s)$, which contains both red points, and the part outside $R_{black}(s)$. We say that \emph{endpoint $A$ or $B$ satisfies the segment $s$} if the corresponding connector $\gamma_{red}(A,s)$ or $\gamma_{red}(B,s)$ is completely contained in $R_{black}(s)$. We deduce a statement similar to Lemma~\ref{lem:satisfy-clause}.

\begin{figure}[htb]
 \centering
 \includegraphics{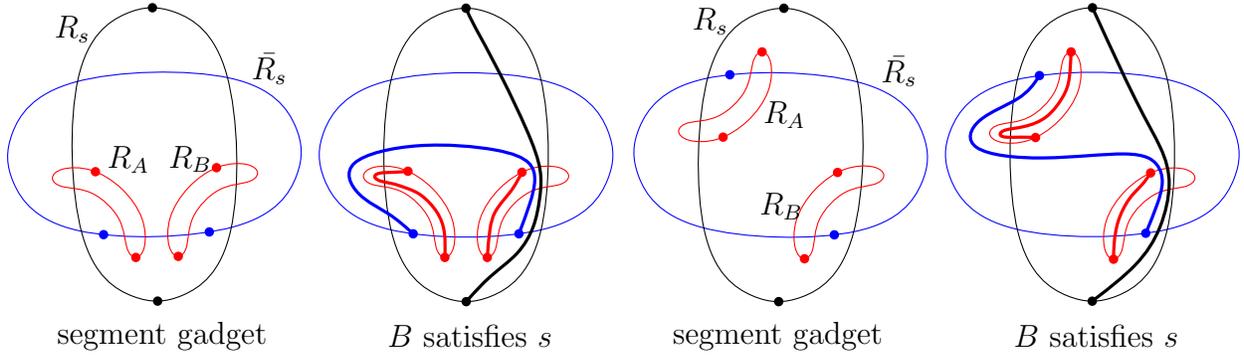}
 \caption{The two possible segment gadgets.}
 \label{fig:clause-gadget-2}
\end{figure}

\begin{lemma}\label{lem:satisfy-segment}
 In every set of non-crossing connectors for the $4$ regions of a segment gadget at least one endpoint satisfies the segment. Moreover, non-crossing connectors do exist as soon as one endpoint satisfies the segment.
\end{lemma}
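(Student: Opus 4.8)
\textbf{Proof plan for Lemma~\ref{lem:satisfy-segment}.}

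The plan is to mimic the structure of the clause gadget analysis from Lemma~\ref{lem:satisfy-clause}, noting that a segment gadget is a clause gadget with one of the three red regions (the ``artificial'' one) removed, so that the remaining red regions $R_{red}(A,s)$ and $R_{red}(B,s)$ play exactly the roles of two of the three variable regions. First I would establish the ``at least one endpoint satisfies'' direction by a topological contradiction argument. Suppose non-crossing connectors exist but neither $\gamma_{red}(A,s)$ nor $\gamma_{red}(B,s)$ is completely contained in $R_{black}(s)$; then each of these two red connectors must leave $R_{black}(s)$ to reach its outside point, so each must cross $\partial R_{black}(s)$. Since $\gamma_{black}(s)$ connects its two black points through the region $R_{black}(s)$, and the red connectors are confined to their respective $4$-intersecting red regions, I would trace how the red connectors, the black connector, and the blue connector are forced by the geometry of Figure~\ref{fig:clause-gadget-2} to pairwise obstruct one another — the standard ``three curves cannot all pass a bottleneck'' type argument — deriving that two of the five connectors must intersect, a contradiction.

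Second, for the converse direction (``non-crossing connectors exist as soon as one endpoint satisfies the segment''), I would argue constructively: assume w.l.o.g.\ that endpoint $A$ satisfies $s$, i.e., we are allowed to route $\gamma_{red}(A,s)$ entirely inside $R_{black}(s)$. Then I would exhibit an explicit non-crossing routing of all four connectors, reading it off directly from Figure~\ref{fig:clause-gadget-2}: route $\gamma_{red}(A,s)$ inside $R_{black}(s)$ between its two red points; route $\gamma_{red}(B,s)$ through the outside component of $R_{red}(B,s)$, pushing its in-black portion aside; then route $\gamma_{black}(s)$ and $\gamma_{blue}(s)$ in the freed-up space. Since the figure is asserted to ``verify'' the corresponding statement for the clause gadget, the same picture (restricted to the four regions, and in either of the two mirrored variants) gives the witnessing drawing; one only needs to check that the two segment-gadget variants each admit such a routing, which is symmetric to the clause case.

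The main obstacle I expect is the first direction: making the ``two curves are forced to cross'' argument rigorous without appealing to a picture. The cleanest way is probably to use the fact that each red region is $4$-intersecting with $R_{black}(s)$, so $R_{red}(\cdot,s) \setminus R_{black}(s)$ has exactly the two components described, and then to consider the cyclic order in which the connectors and the arcs of the various boundaries appear along $\partial R_{black}(s)$ (or along a suitable Jordan curve separating the inside and outside structure). One shows that if both red connectors exit $R_{black}(s)$, the blue connector is separated from part of its own point set, or the black connector is forced to cross a red one, by a parity/Jordan-curve argument on how many times each connector can cross $\partial R_{black}(s)$. This is where the bulk of the work lies; the rest is symmetric bookkeeping over the two gadget variants and transcription from the figure.
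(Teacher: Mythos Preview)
Your plan aligns with the paper's approach, which is itself minimal: the paper gives no detailed proof of Lemma~\ref{lem:satisfy-segment}, simply introducing it as ``a statement similar to Lemma~\ref{lem:satisfy-clause}'' and relying on Figure~\ref{fig:clause-gadget-2} for verification. Your second direction (exhibit a witnessing drawing by reading it off the figure, symmetric in the two mirrored variants) is exactly what the paper does.

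For the first direction your outline is more ambitious than the paper's, but it contains a misreading of the gadget. You write that each red connector ``must leave $R_{black}(s)$ to reach its outside point''; however, the paper is explicit that \emph{both} red points of $R_{red}(A,s)$ (and likewise of $R_{red}(B,s)$) lie in the component of $R_{red}(\cdot,s)$ \emph{inside} $R_{black}(s)$ --- the outside component ``does not contain a red point''. So there is no outside point to reach. The correct reformulation of ``neither endpoint satisfies $s$'' is that each red connector, though it has both endpoints inside $R_{black}(s)$, makes an excursion into the outside component of its red region; your obstruction argument must then show that two such excursions, through the specific exits drawn in Figure~\ref{fig:clause-gadget-2}, cannot coexist with non-crossing black and blue connectors. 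You also refer to ``two of the five connectors'', but the segment gadget has only four regions, and you call the red regions ``$4$-intersecting'' with black when the paper's description (one inside component, one outside component) indicates they are $2$-intersecting with $R_{black}(s)$. These slips do not wreck the plan, but they must be corrected before the Jordan-curve/parity argument you sketch can actually be carried out. In the paper's own treatment this entire first direction is left to inspection of the figure.
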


\medskip
\noindent
\textbf{Putting things together.} We finally describe the regions corresponding to the formula $\psi$. For every clause $c$ we define a clause gadget as depicted in the left of Figure~\ref{fig:convex-gadgets}, which is completely contained in the zone $Z(c)$ corresponding to $c$. Similarly, for every segment $s$ we define a segment gadget as depicted in the right of Figure~\ref{fig:convex-gadgets}, which is completely contained in the zone $Z(s)$ corresponding to $s$. The middle part of a segment gadget, which is highlighted in Figure~\ref{fig:convex-gadgets}, is stretched such that the gadget reaches from one end of the zone to the other. Note that both gadgets are combinatorially equivalent to the ones in Figure~\ref{fig:clause-gadget} and Figure~\ref{fig:clause-gadget-2}, and consist of solely convex polygons with at most $8$ corners. 

\begin{figure}[htb]
 \centering
 \includegraphics[width=0.9\textwidth]{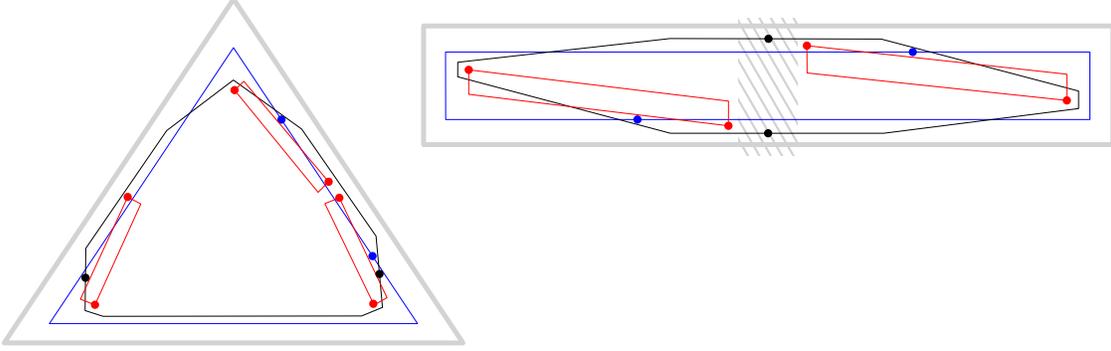}
 \caption{Clause gadget and segment gadget with solely convex polygons with at most $8$ corners.}
 \label{fig:convex-gadgets}
\end{figure}

We let two gadgets intersect as specified in Figure~\ref{fig:2-by-2}. Note that which segment gadget we use for a segment $s$ depends on where the angle of $60\degree$ at either end of the zone $Z(s)$ lies. Consider two intersecting gadgets with intersecting red regions $R_{red}$ and $R'_{red}$, and $4$-intersecting black regions $R_{black}$ and $R'_{black}$. Let $\gamma_{red}$ and $\gamma'_{red}$ be the connector for $R_{red}$ and $R'_{red}$, respectively. It is not difficult to verify that if $\gamma_{red}$ is contained in $R_{black}$, then $\gamma'_{red}$ is not contained in $R'_{black}$. And similarly if $\gamma'_{red}$ is contained in $R'_{black}$, then $\gamma_{red}$ is not contained in $R_{black}$. In other words, only of the two segments/clauses can be satisfied by the endpoint/variable corresponding to the intersection. More formally, we have proven the following lemma.

\begin{figure}[htb]
 \centering
 \includegraphics[width=0.7\textwidth]{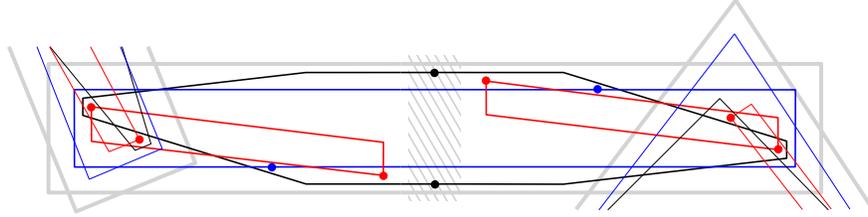}
 \caption{Overlap of a segment gadget with another segment gadget and a clause gadget.}
 \label{fig:2-by-2}
\end{figure}

\begin{lemma}\label{lem:segment-intersection}
 If $R_{red}(A,s) \cap R_{red}(B,s') \neq \emptyset$ for two segments $s,s'$, then $A$ does not satisfy $s$ or $B$ does not satisfy $s'$. Similarly, if $R_{red}(A,s) \cap R_{red}(x,c)$ for some variable $x$ in a clause $c$, then $A$ does not satisfy $s$ or $x$ does not satisfy $c$.
\end{lemma}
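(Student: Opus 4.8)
The plan is to argue entirely locally, using the way two gadgets overlap as fixed in Figure~\ref{fig:2-by-2}. Fix two intersecting gadgets with overlapping red regions $R_{red}$, $R'_{red}$ and $4$-intersecting black regions $R_{black}$, $R'_{black}$ (for the second statement one of the two gadgets is a clause gadget). By the construction of the zones and of the two gadgets, these four regions are the only regions of the two gadgets that meet the other gadget at all; all other regions — the blue ones, and the remaining red regions of a clause gadget — are pairwise disjoint across the two gadgets. Hence it suffices to show that the connectors $\gamma_{red}$ for $R_{red}$ and $\gamma'_{red}$ for $R'_{red}$ cannot both be contained in $R_{black}$ and $R'_{black}$, respectively, since ``$A$ satisfies $s$ and $B$ satisfies $s'$'' (resp. ``$A$ satisfies $s$ and $x$ satisfies $c$'') is by definition exactly this situation.

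Assume for contradiction that $\gamma_{red}\subset R_{black}$ and $\gamma'_{red}\subset R'_{black}$. Since also $\gamma_{red}\subset R_{red}$ and $\gamma'_{red}\subset R'_{red}$ by definition of a connector, we get $\gamma_{red}\subset S:=R_{red}\cap R_{black}$ and $\gamma'_{red}\subset S':=R'_{red}\cap R'_{black}$. By the definition of the clause/segment gadget, $S$ is a topological disk containing both red points of $R_{red}$ (it is the ``inside'' part of $R_{red}$), and likewise $S'$ is a disk containing both red points of $R'_{red}$. Inspection of Figure~\ref{fig:2-by-2} shows that in every overlap of two gadgets the disks $S$ and $S'$ cross like two transversal strips: their intersection $Q:=S\cap S'$ is a single quadrilateral cell lying in the interior of each strip, the two red points of $R_{red}$ lie in the two different components of $S\setminus Q$, and the two red points of $R'_{red}$ lie in the two different components of $S'\setminus Q$.

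Now $\gamma_{red}$ is a connected subset of $S$ containing a point in each of the two components of $S\setminus Q$ that hold the red points of $R_{red}$, so it must contain a sub-arc that traverses $Q$ from one side to the opposite side (the relevant pair of opposite sides being the two through which strip $S$ continues past $Q$). Since $S$ and $S'$ cross transversally, $\gamma'_{red}$ analogously contains a sub-arc of $Q$ that traverses $Q$ between the other pair of opposite sides. An arc in a quadrilateral joining one pair of opposite sides and an arc joining the other pair of opposite sides must intersect, so $\gamma_{red}\cap\gamma'_{red}\neq\emptyset$, contradicting the fact that the connectors are pairwise disjoint. Hence $A$ does not satisfy $s$ or $B$ does not satisfy $s'$ (resp. $A$ does not satisfy $s$ or $x$ does not satisfy $c$).

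The same argument applies to each of the finitely many combinatorial types of overlap — the two orientations of the segment gadget, and segment--segment versus segment--clause overlaps — because Figure~\ref{fig:2-by-2} exhibits the same ``two transversal strips'' picture in all of them. The only real work, and the main thing to be careful about, is precisely this verification: confirming from the figure that in every admissible overlap the sets $S=R_{red}\cap R_{black}$ and $S'=R'_{red}\cap R'_{black}$ genuinely meet in a single cell $Q$ that separates the two red points on each strip, and that no third region of either gadget offers a ``detour'' that would let a red connector leave its strip while still remaining inside its black region. Once that local picture is checked, the topological crossing argument above closes the proof.
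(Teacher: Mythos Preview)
Your proof is correct and follows essentially the same approach as the paper: the paper's argument is the short paragraph immediately preceding the lemma, which simply asks the reader to inspect Figure~\ref{fig:2-by-2} and verify that $\gamma_{red}\subset R_{black}$ forces $\gamma'_{red}\nsubseteq R'_{black}$. You have spelled out this verification explicitly via the transversal-strips/quadrilateral picture, which is exactly the intended local topological reason the figure encodes.
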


Figure~\ref{fig:3-by-2} depicts the mutual overlapping of three segment gadgets. The next lemma can be verified by carefully investigating the intersection pattern of red and black regions.

\begin{figure}[htb]
 \centering
 \includegraphics[width=0.5\textwidth]{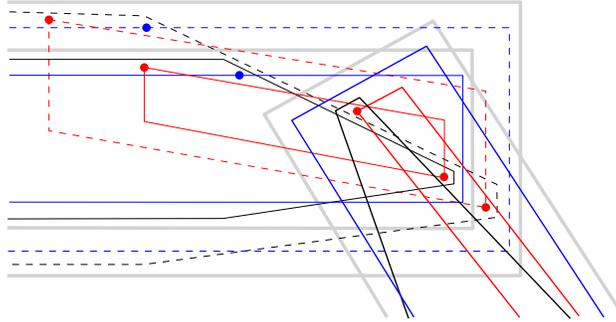}
 \caption{Three mutually overlapping segment gadgets.}
 \label{fig:3-by-2}
\end{figure}

\begin{lemma}\label{lem:segment-3-intersection}
 Suppose $R_{red}(A_1,s_1) \cap R_{red}(A_2,s_2) \cap R_{red}(B_3,s_3) \neq \emptyset$ for three segments $s_1,s_2,s_3$. If $A_1$ satisfies $s_1$ or $A_2$ satisfies $s_2$, then $B_3$ does not satisfy $s_3$.
\end{lemma}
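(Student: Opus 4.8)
The plan is to reduce the disjunctive implication to two pairwise conflicts of exactly the type already settled in Lemma~\ref{lem:segment-intersection}. A statement of the form ``$P$ or $Q$ implies $\lnot S$'' is logically equivalent to the conjunction of ``$P$ implies $\lnot S$'' and ``$Q$ implies $\lnot S$''. Hence it suffices to prove, as two separate implications, that if $A_1$ satisfies $s_1$ then $B_3$ does not satisfy $s_3$, and that if $A_2$ satisfies $s_2$ then $B_3$ does not satisfy $s_3$. Each of these relates the red region of $s_3$ to the red region of exactly one of the other two segments, so the three-way hypothesis is really a packaging of two two-way conflicts, and I would never need to assume that both $A_1$ and $A_2$ satisfy simultaneously.

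First I would record the consequence of the triple intersection being nonempty: since $R_{red}(A_1,s_1) \cap R_{red}(A_2,s_2) \cap R_{red}(B_3,s_3) \neq \emptyset$, in particular $R_{red}(A_1,s_1) \cap R_{red}(B_3,s_3) \neq \emptyset$ and $R_{red}(A_2,s_2) \cap R_{red}(B_3,s_3) \neq \emptyset$, so both relevant pairs of red regions already overlap inside the common junction drawn in Figure~\ref{fig:3-by-2}. For the first implication, suppose $A_1$ satisfies $s_1$, i.e.\ the connector $\gamma_{red}(A_1,s_1)$ is completely contained in $R_{black}(s_1)$. The inside-black component of $R_{red}(A_1,s_1)$ runs through the junction, so any connector contained in $R_{black}(s_1)$ must occupy the narrow corridor of $R_{black}(s_1)$ that crosses the junction. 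Reading off Figure~\ref{fig:3-by-2}, this corridor coincides with the only passage through which $\gamma_{red}(B_3,s_3)$ could join its two red points while staying inside $R_{black}(s_3)$. Because the connectors are non-crossing and $\gamma_{red}(A_1,s_1)$ occupies that passage, $\gamma_{red}(B_3,s_3)$ cannot thread through it and cannot detour around it inside $R_{black}(s_3)$; it is forced to use the outer component of $R_{red}(B_3,s_3)$, and hence $B_3$ does not satisfy $s_3$.

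The argument for the second implication is verbatim the same with $s_2, A_2$ in place of $s_1, A_1$: the inside-black corridor of $R_{black}(s_2)$ through the junction again blocks the inside-black passage of $s_3$. The asymmetry of the lemma, namely that it is $B_3$ and not $A_1$ or $A_2$ whose satisfaction is excluded, reflects the geometry of Figure~\ref{fig:3-by-2}, where the corridors of $s_1$ and $s_2$ lie on opposite sides and may coexist while the corridor of $s_3$ meets both. The main obstacle is precisely the geometric bookkeeping in these two corridor arguments: one has to verify, from the explicit overlap pattern of red and black regions in Figure~\ref{fig:3-by-2}, that the inside-black passage of $s_3$ genuinely meets the inside-black corridor of each of $s_1$ and $s_2$, so that the presence of the third gadget does not open up extra room inside $R_{black}(s_3)$ that would let $B_3$ be satisfied together with $A_1$ or with $A_2$. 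This is the same kind of case analysis used for Lemma~\ref{lem:segment-intersection}, now performed once for the pair $(s_1,s_3)$ and once for the pair $(s_2,s_3)$.
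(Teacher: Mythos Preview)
Your proposal is correct and matches the paper's approach: the paper itself offers no proof beyond the sentence ``can be verified by carefully investigating the intersection pattern of red and black regions'' in Figure~\ref{fig:3-by-2}, and your write-up is exactly that inspection, made explicit by the (sound) logical decomposition of the disjunctive hypothesis into two separate pairwise conflicts. One small framing point: you present the argument as a reduction to Lemma~\ref{lem:segment-intersection}, but that lemma is stated for the specific two-gadget overlap of Figure~\ref{fig:2-by-2}, whereas here the pair $(s_i,s_3)$ sits inside the three-gadget junction of Figure~\ref{fig:3-by-2}, which is a different picture; you correctly flag this yourself when you note that one must check the third gadget does not open extra room, so in the end you are not invoking Lemma~\ref{lem:segment-intersection} but redoing its style of case analysis on the new figure---which is precisely what the paper intends.
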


We are now ready to prove Theorem~\ref{thm:NP-hard-convex}.

\noindent
\textbf{Proof of Theorem~\ref{thm:NP-hard-convex}:}
 Let $\psi$ be a 3-SAT formula, and $G_\psi$ be planar with maximum degree $3$. We define an instance $\mathcal{I}$ of the non-crossing connectors problem as described above, consisting of linearly many regions, which are convex polygons with at most $8$ corners and constantly many different slopes. We claim that $\psi$ is satisfiable if and only if non-crossing connectors exist for $\mathcal{I}$.

 First consider a set of non-crossing connectors for $\mathcal{I}$. By Lemma~\ref{lem:satisfy-clause} every clause is satisfied by at least one variable. If variable $x$ satisfies clause $c$, we set $x$ to true if $x$ is positive in $c$ and false if $x$ is negated in $c$. In case some variable has not received a truth value, we choose one arbitrarily. If this truth assignment is consistent, i.e., no variable is set to true and false at the same time, then it clearly satisfies formula $\psi$. So consider a variable $x$ receiving a truth assignment from clause $c$, i.e., $x$ satisfies $c$. We show that $x$ does not satisfy a clause $c'$, in which $x$ appears with the opposite sign. Consider the chain of segments $s_1,\ldots,s_k$ such that $R_{red}(A_1,s_1) \cap R_{red}(x,c) \neq \emptyset$, $R_{red}(B_k,s_k) \cap R_{red}(x,c') \neq \emptyset$, and $R_{red}(B_i,s_i) \cap R_{red}(A_{i+1},s_{i+1}) \neq \emptyset$, for every $i=1,\ldots,k-1$. By Lemma~\ref{lem:segment-intersection} $A_1$ does not satisfy $s_1$. Then by Lemma~\ref{lem:satisfy-segment} $B_1$ satisfies $s_1$. Then again by Lemma~\ref{lem:segment-intersection}, or may be Lemma~\ref{lem:segment-3-intersection}, $A_2$ does not satisfy $s_2$. and by Lemma~\ref{lem:satisfy-segment} $B_2$ satisfies $s_2$. Iterating this pattern yields that $B_k$ satisfies $s_k$ and thus $x$ does not satisfy $c'$, which is what we wanted to prove.

 Secondly, we consider a satisfying truth assignment for formula $\psi$ and want to conclude that there is a collection of non-crossing connectors for $\mathcal{I}$. We define the connectors for the clause gadget similarly to Figure~\ref{fig:clause-gadget}, such that $x$ satisfies $c$ if and only if the variable $x$ is assigned true and appears positive in $c$, or is assigned false and appears negated in $c$. Since $\psi$ is satisfied, every clause $c$ has at least one such variable $x$, so by Lemma~\ref{lem:satisfy-clause} such non-crossing connectors do exist. Then we define the remaining connectors along the chain of segments starting at a clause $c$ that is satisfied by $x$ and ending at a clause $c'$ that is not satisfied by $x$. By a reasoning similar to the previous one, we can construct non-crossing connectors this way for the entire instance $\mathcal{I}$.
\hfill $\square$

\section{Conclusions}\label{sec:conclusions}
In this paper we investigated the computational complexity of the non-crossing connectors problem, i.e., given pairwise disjoint finite point sets $P_1,\ldots,P_n$ and simply connected regions $R_i \supset P_i$ for every $i=1,\ldots,n$, is there a set of pairwise disjoint curves $\gamma_1,\ldots,\gamma_n$, called connectors, such that $P_i \subset \gamma_i \subset R_i$ for every $i=1,\ldots,n$. We proved that the existence of non-crossing connectors can be tested in polynomial time if the regions are pseudo-disks or axis-aligned rectangles. It might be worthwhile to derive from our proofs polynomial-time algorithms to actually compute non-crossing connectors. We proved that the problem is NP-complete for $4$-intersecting convex regions, even if every $P_i$ consists of $2$ elements. However, we do not know the complexity in case each $R_i$ is the convex hull of the corresponding $P_i$. Moreover, it is interesting to consider other sets of regions, like ellipsoids or isosceles triangles with horizontal bases.

Instead of fixing the position of the points in $P_i$ one can consider a set of possible positions for every such point. From previous results~\cite{many10,many11} it follows that this variant is NP-complete if every point has at most $3$ possible positions, but this proof again does not work if $R_i = \operatorname{conv}(P_i)$ for $i=1,\ldots,n$. What if we want to connect pairs of vertical straight segments by non-crossing curves, each within the convex hull of the corresponding segments?

Furthermore, we could allow $P_i \cap P_j \neq \emptyset$ for $i \neq j$ and allow connectors $\gamma_i,\gamma_j$ to intersect in $P_i \cap P_j$. If $|P_i| = 2$ for every $i$, this corresponds to drawing a given planar graph with fixed vertex positions and curved edges, each lying within a prescribed region. In this variant non-crossing connectors sometimes do not exist even when regions are pseudo-disks.

\section*{Acknowledgements}

We would like to thank Maria Saumell, Stefan Felsner and Irina Mustata for fruitful discussions.


\bibliography{lit}
\bibliographystyle{amsplain}

\end{document}